\newcommand{\danger}[1]{\textbf{#1}}
\newtheorem{definition}{Definition}
\newtheorem{theorem}{Theorem}
\newtheorem{lemma}{Lemma}
\begin{document}

\title{\danger{Quantum Geometry II : \\
\small{The Mathematics of Loop Quantum Gravity \\} 
\normalsize{Three dimensional quantum gravity}}}
\author{\centerline{\danger{J. Manuel Garc\'\i a-Islas \footnote{
e-mail: jmgislas@gmail.com}}}  \\
}

\maketitle

\begin{abstract}
Loop quantum gravity is a physical theory which aims at unifying general relativity and quantum mechanics. 
It takes general relativity very seriously and modifies it via a  quantisation. General relativity describes gravity 
in terms of geometry. Therefore, quantising such theory must be equivalent to quantising geometry and that is
what loop quantum gravity does. This sounds like a mathematical task as well. 
This is why in this paper we will present the mathematics of loop quantum gravity. We will
do it from a mathematician point of view. This paper is intended to be an introduction to loop quantum gravity for postgraduate students of physics 
and mathematics.
In this work we will restrict ourselves to the three dimensional case.           
\end{abstract}

\bigskip

\bigskip

Key Words: Quantum Gravity, Quantum Geometry. 

PACS Number: 04.60, 04.60.Pp 

\bigskip 

\bigskip

\bigskip

\bigskip 

\bigskip

\bigskip

\bigskip

\newpage

\section{Introduction}

It is very well known in physics that 
classical physical theories have a quantum version. Classical Mechanics and Quantum Mechanics,
Classical Electrodynamics and Quantum Electrodynamics, Special Relativity and Quantum Field Theory. 

The question then is: what is the
quantum version of general relativity? 
There is no agreement on the answer to this question.

Loop quantum gravity \cite{r}, \cite{rv}, \cite{t} \cite{gp}, 
is a theory which describes the quantum version of general relativity. Since general relativity is a geometrical theory,
quantising such a theory is equivalent to thinking of quantising classical geometry.\footnote{
Mathematically speaking,  
quantum geometry may also refer to a subject known as noncommutative geometry \cite{ac}. But for us it will refer
to loop quantum gravity which is also a quantum geometry.}
This is what loop quantum gravity does in its own original way.   However, unfortunately mathematicians are not
aware of loop quantum gravity and it may be due to the fact that it is a theory
mostly invented by physicists. The truth is that loop quantum gravity is such a beautiful theory with
very precise, rigorous mathematical background which exploits many different subjects of outstanding
level. 

Our intention and our project is to write papers introducing loop quantum gravity to postgraduate students in physics and mathematics. We
write it from the point of view of a mathematician. We hope that more mathematicians get to know
and become interested in loop quantum gravity. 

However, the project may also be considered from the loop quantum gravity community itself as 
writing about the mathematics of loop quantum gravity. 

Loop quantum gravity is a very extensive field of study and there are numerous things to deal with. Our project is to 
write about the mathematics of loop quantum gravity and therefore write many works on this subject by dealing with a particular theme 
in each work. We have already started this project describing quantum polyhedra in \cite{gi}.

In this work we will deal with three dimensional quantum gravity which has been extensively studied in the literature and there are also
many things in this particular theme to write about. We concentrate on the essentials. Our presentation of the mathematics
of loop quantum gravity is completely original and described in our own way. In this sense this is an original work and an
original project.

In section 2 and 3 we present the mathematical background which is also useful for many other themes in loop quantum gravity. 
In section 4 and 5 we introduce three dimensional loop quantum gravity.

\section{Topological groups and representations}

In this section we review some mathematical background related to Lie groups and linear representations which is used in loop quantum gravity.
This section is based on \cite{yks}, \cite{bh}. 
  
\begin{definition}
A topological group $G$, is a group which is also a topological space such that the product and inverse operations are continuous. It is said to be compact 
if it is a compact space.  
\end{definition}
We will only consider compact topological groups.

\bigskip

There exists a unique measure on $G$ with the following properties.

\bigskip

1) $\int_{G} f(g) \ dg = \int_{G} f(g h) \ dg$ for every $h \in G$.

\bigskip

2) $\int_{G} f(g) \ dg = \int_{G} f(h g) \ dg$ for every $h \in G$.

\bigskip

3) $\int_{G} f(g) \ dg = 1$.

\bigskip

The measure $dg$ is called invariant measure or the Haar measure of $G$.\footnote{Since we are considering compact Lie groups only, it is known that right invariant measures
are also left invariant and that a normalisation condition such as 3) can be chosen. We will not be concerned with this here.}

\begin{definition}
A linear representation of $G$ in a Hilbert space $H$ is a group homomorphism $\rho : G \longrightarrow GL(H)$, such that $\rho(g) x$ is a continuous map
for every $x \in H$.
\end{definition}

Therefore, we have that for $g$ and $g' \in G$,  $\rho(g g') = \rho(g) \rho(g')$, $\rho(g^{-1}) = \rho(g)^{-1} $ and $\rho(e) = Id_{H}$, where $e$ is the identity 
of the group and $Id_{H}$ is the identity map in $H$.

The dimension of $H$ is called the dimension of the representation and a representation will be denoted by the pair $(H, \rho)$.

\begin{definition}
A linear representation of $G$ in a Hilbert space $H$ is a unitary representation if $\rho(g)$ is a unitary operator for every $g \in G$.
Equivalently, $\rho$ is unitary if $< \rho(g) x , \rho(g) y > = < x , y >$ for every $g \in G$ and every $x, y \in H$.\footnote{$< , >$ is the scalar product in H
which we assume to be conjugate linear in the first entry and linear in the second} 
\end{definition}

If $(e_i)$ is a basis of $H$ we have the matrix coefficients of the representation $(H, \rho)$ given by

\begin{equation}
\rho_{ij}(g) = < e_i , \rho(g) e_j > \nonumber
\end{equation}

\begin{definition}
An intertwining operator between two unitary representations $(H_1 , \rho_1)$ and $(H_2, \rho_2)$ is a continuos linear map $T: H_1 \longrightarrow H_2$ such that 
for every $g \in G$ the following diagram commutes
\end{definition}

\begin{tikzcd}
H_1 \arrow[r, "T"]\arrow[d,  "\rho_1(g)"red]
& H_2 \arrow[d, "\rho_2(g)" ] \\
H_1 \arrow[r,  "T" red]
& H_2
\end{tikzcd}

\begin{definition}
The representations $(H_1 , \rho_1)$ and $(H_2, \rho_2)$ are said to be equivalent if there exists a bijective $T: H_1 \longrightarrow H_2$ 
intertwining operator. 
\end{definition}
This implies that we have an equivalence class of a representation.

\begin{definition}
A vector subspace $W \subset H$ is said to be invariant under $(H , \rho)$, if $\forall g \in G$, $\rho(g) W \subset W$.
\end{definition}

\begin{definition}
A representation $(H , \rho)$ is said to be irreducible if $H \neq \{ 0 \}$ and if $\{ 0 \}$ and $H$ are the only 
vector subspaces of $H$ which are invariant under $\rho$.
\end{definition}

\begin{definition}
Let $(H_1 , \rho_1)$ and $(H_2 , \rho_2)$ be representations of $G$. The direct sum is the representation $(H_1 \oplus H_2 , \rho_1 \oplus \rho_2)$
so that $\forall g \in G$ and $x_1 \in H_1, x_2 \in H_2$.
\end{definition}

\begin{equation}
( \rho_1 \oplus \rho_2)(g)(x_1 , x_2) = (\rho_1(g) x_1 ,  \rho_2(g)x_2)\nonumber
\end{equation}
The matrix representation of a direct sum is given by 

\begin{equation*}
    \left(
      \begin{array}{ccc}
        \rho_1(g) & 0   \\
        0 & \rho_2(g)  
        \end{array} \right)
  \end{equation*}
Analogously the direct sum of an arbitrary finite number of representations is defined.

\begin{definition}
Let $(H_1 , \rho_1)$ and $(H_2 , \rho_2)$ be representations of $G$. The tensor product is the representation $(H_1 \otimes H_2 , \rho_1 \otimes \rho_2)$
so that $\forall g \in G$ and $x_1 \in H_1, x_2 \in H_2$.
\end{definition}

\begin{equation}
( \rho_1 \otimes \rho_2)(g)(x_1 \otimes x_2) = \rho_1(g) x_1 \otimes  \rho_2(g)x_2 \nonumber
\end{equation}
The matrix representation of a tensor product is given by the Kronecker tensor product of the matrices

\begin{equation*}
     \left(
      \begin{array}{ccc}
        \rho_{11} \rho_2(g) & \cdots & \rho_{1n} \rho_2(g) \\
        
        \vdots & \vdots & \ddots \\
        \rho_{n1} \rho_2(g) & \cdots & \rho_{nn} \rho_2(g)
      \end{array} \right)
  \end{equation*}

\begin{definition}
A representation $(H , \rho)$ is said to be reducible  if it is the direct sum of irreducible representations
$(W_1 , \rho_1), (W_2 , \rho_2),..., (W_k , \rho_k)$; this is expressed 
$( H , \rho)  = ( W_1 \oplus W_2 \oplus \cdots \oplus W_k , \rho_1 \oplus \rho_2 \oplus \cdots \oplus \rho_k )$
\end{definition}
or simply

\begin{equation}
 ( H , \rho) = ( \oplus_{i=1}^{k} W_i , \oplus_{i=1}^{k} \rho_i ) \nonumber
\end{equation}

\begin{theorem}
Every unitary representation $(H, \rho)$ of $G$ is reducible.
\end{theorem}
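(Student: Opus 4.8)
The plan is to reduce the whole statement to one key property of unitary representations: the orthogonal complement of an invariant subspace is again invariant. Granting this, the decomposition into irreducibles follows by a routine induction on $\dim H$, so the essential content — and the only place the unitarity hypothesis is used — lies in that single lemma.

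First I would record the elementary consequences of unitarity that make the argument work. From the defining identity $\langle \rho(g) x, \rho(g) y\rangle = \langle x, y\rangle$, replacing $y$ by $\rho(g^{-1}) y$ and using $\rho(g)\rho(g^{-1}) = \rho(e) = Id_H$ gives
\[
\langle \rho(g) x, y\rangle = \langle x, \rho(g^{-1}) y\rangle
\]
for all $x, y \in H$ and all $g \in G$; equivalently $\rho(g)^{*} = \rho(g^{-1})$. Now suppose $W \subset H$ is invariant, i.e. $\rho(g) W \subset W$ for every $g$, and let $w \in W^{\perp}$. For an arbitrary $v \in W$ we compute
\[
\langle \rho(g) w, v\rangle = \langle w, \rho(g^{-1}) v\rangle = 0,
\]
since $\rho(g^{-1}) v \in W$ by invariance of $W$ while $w$ is orthogonal to all of $W$. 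As $v \in W$ was arbitrary this shows $\rho(g) w \in W^{\perp}$, and as $g$ was arbitrary $W^{\perp}$ is invariant. Thus whenever $W$ is a $\rho$-invariant subspace we obtain an orthogonal, $\rho$-invariant splitting $H = W \oplus W^{\perp}$.

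With this splitting available I would finish by induction on $\dim H$. If $(H, \rho)$ is already irreducible there is nothing to prove: it is the direct sum of the single irreducible subrepresentation given by $H$ itself (this base case covers in particular every one-dimensional representation). Otherwise, by the definition of irreducibility there is a proper nonzero invariant subspace $W$; the lemma supplies the invariant decomposition $H = W \oplus W^{\perp}$ into two summands of strictly smaller dimension, and restricting $\rho$ to each gives subrepresentations $(W, \rho|_W)$ and $(W^{\perp}, \rho|_{W^{\perp}})$. The induction hypothesis decomposes each of these as a finite direct sum of irreducibles, and concatenating the two lists exhibits $(H, \rho)$ as a finite direct sum of irreducibles, which is exactly the definition of reducible. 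The only genuine obstacle is the orthogonal-complement lemma above, where unitarity is indispensable — for a non-unitary representation an invariant $W$ need not have an invariant complement. I would also note that the finite-dimensional framing is forced by the definition of reducible, which demands a finite direct sum; in an infinite-dimensional Hilbert space the same idea runs through Zorn's lemma, producing a maximal family of mutually orthogonal irreducible invariant subspaces whose closed linear span must then be all of $H$.
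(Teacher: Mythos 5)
Your proof is correct and follows essentially the same route as the paper: take a proper invariant subspace $W$, show its orthogonal complement $W^{\perp}$ is also invariant (the one place unitarity enters), and induct on dimension. In fact, you supply the details the paper's own proof leaves implicit, namely the explicit verification via $\rho(g)^{*} = \rho(g^{-1})$ that $W^{\perp}$ is invariant.
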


\begin{proof}
If the representation is irreducible, then we are finished. Suppose then that the representation $(H , \rho)$ is unitary and not irreducible; then there
must be an invariant subspace $W \subset H$. Since there is a scalar product we can consider the complement of $W$ in $H$ denoted $W^{\perp}$
such that $H = W \oplus W^{\perp}$. It can be shown that $W^{\perp}$ is also an invariant subspace. By induction the theorem follows. 
\end{proof}

\begin{lemma}\danger{( Schur )} Let $(H_1 , \rho_1)$ and $(H_2 , \rho_2)$ be unitary irreducible representations of $G$. And let $T: H_1 \longrightarrow H_2$
be an intertwining operator. Then $T = 0$ or $T$ is an isomorphism. 
\end{lemma}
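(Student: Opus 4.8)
The plan is to study the two natural subspaces attached to $T$, namely its kernel $\ker T \subset H_1$ and its image $T(H_1) \subset H_2$, and to show that each is invariant under the relevant representation. Irreducibility of $(H_1 , \rho_1)$ and $(H_2 , \rho_2)$ then forces each of these subspaces to be either trivial or the whole space, and a short case analysis yields the stated dichotomy.

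First I would show that $\ker T$ is invariant under $\rho_1$. If $x \in \ker T$, then for any $g \in G$ the intertwining relation $T \rho_1(g) = \rho_2(g) T$, which is exactly the commutativity of the square in the definition of an intertwining operator, gives $T(\rho_1(g) x) = \rho_2(g)(T x) = \rho_2(g) \cdot 0 = 0$, so that $\rho_1(g) x \in \ker T$. Applying the definition of irreducibility to $(H_1 , \rho_1)$, either $\ker T = H_1$, in which case $T = 0$ and we are finished, or $\ker T = \{ 0 \}$, in which case $T$ is injective.

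Next, assuming $T \neq 0$ so that $\ker T = \{ 0 \}$, I would show that the image $T(H_1)$ is invariant under $\rho_2$. Indeed, any element of $T(H_1)$ has the form $T x$, and $\rho_2(g)(T x) = T(\rho_1(g) x) \in T(H_1)$, again by the intertwining relation. Since $T \neq 0$ the image is not $\{ 0 \}$, so irreducibility of $(H_2 , \rho_2)$ forces $T(H_1) = H_2$, i.e. $T$ is surjective. Combining injectivity and surjectivity, $T$ is a bijective intertwiner, which is precisely an isomorphism of representations in the sense of the equivalence defined above.

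The conceptual content is elementary; the only genuine subtlety, and the step I expect to require the most care, concerns the infinite-dimensional setting. There the image $T(H_1)$ need not be closed, so surjectivity onto a merely dense invariant subspace must be reconciled both with the topological notion of irreducibility, where only closed invariant subspaces are required to be trivial, and with the requirement that the inverse map be continuous. For the compact groups considered in this paper this difficulty does not arise, since their irreducible unitary representations are finite-dimensional and every linear bijection between finite-dimensional spaces is automatically a homeomorphism; I would therefore either restrict to that case or invoke the closedness of invariant subspaces to cover the general statement.
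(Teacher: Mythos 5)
The paper does not actually prove this lemma: it is stated as a known result (Schur's lemma) and used later, so there is no proof of record to compare yours against. Judged on its own, your argument is correct and is the standard one: $\ker T$ is a $\rho_1$-invariant subspace (closed automatically, since the paper's definition of intertwiner requires continuity), the image $T(H_1)$ is a $\rho_2$-invariant subspace, and irreducibility of each representation collapses the case analysis to $T=0$ or $T$ bijective; bijectivity is exactly what the paper's Definition 5 demands of an equivalence. Your closing remark is the one genuinely delicate point, and you handle it correctly: under the paper's Definition 7, irreducibility is stated for \emph{arbitrary} vector subspaces, not just closed ones, so your image argument applies verbatim; under the more common topological definition (only closed invariant subspaces are trivial) one would only get density of the image, and one must either invoke finite-dimensionality of irreducible unitary representations of compact groups --- which is the setting of this paper --- or use the adjoint trick: $T^{*}T$ is a self-intertwiner of $(H_1,\rho_1)$, hence a scalar $\lambda\, Id_{H_1}$ by the spectral-theorem form of Schur's lemma, so for $\lambda\neq 0$ the operator $T$ is a multiple of an isometry and has closed image. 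Either route closes the gap; as written, your proof is complete relative to the paper's own definitions.
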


Consider the vector space $\mathbb{C}^{G} =\{ f : G \longrightarrow \mathbb{C} \}$ of complex valued functions on $G$, and define the scalar product

\begin{equation}
 < f_1 \mid f_2 >  = \int_{G} \overline{f_1 (g)}  f_2 (g) \ dg \nonumber
\end{equation}
where $dg$ is the Haar measure. By completing this space we obtain the Hilbert space of square integrable functions $L^{2}(G)$.

\begin{theorem}
Let $(H_1 , \rho_1)$ and $(H_2 , \rho_2)$ be irreducible unitary representations of $G$. Let $x_1, y_1 \in H_1$ and $x_2, y_2 \in H_2$. Then
\begin{equation}
 < \rho_{x_1 y_1}(g) \mid \rho_{x_2 y_2}(g) > = 0 \nonumber
\end{equation}
if the representations are not equivalent, or
\begin{equation}
 < \rho_{x_1 y_1}(g) \mid \rho_{x_2 y_2}(g) > = \frac{1}{n} < x_2 , x_1 > < y_1 , y_2 >\nonumber
\end{equation}
if the representations are equivalent and where $n$ is the dimension of $H_1 \sim H_2$.
\end{theorem}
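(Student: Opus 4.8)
The statement is the Schur orthogonality relations for matrix coefficients, and the plan is to deduce them from Schur's lemma (the preceding Lemma) by a group-averaging construction. The first step is to put the $L^{2}$ inner product into a usable form. Writing $\rho_{x_1 y_1}(g) = \langle x_1 , \rho_1(g) y_1 \rangle$ and using that $\rho_1$ is unitary, the complex conjugate appearing in the Haar integral becomes $\overline{\rho_{x_1 y_1}(g)} = \langle \rho_1(g) y_1 , x_1 \rangle = \langle y_1 , \rho_1(g)^{-1} x_1 \rangle$, so that
\begin{equation}
\langle \rho_{x_1 y_1} \mid \rho_{x_2 y_2} \rangle = \int_{G} \langle y_1 , \rho_1(g)^{-1} x_1 \rangle \, \langle x_2 , \rho_2(g) y_2 \rangle \, dg . \nonumber
\end{equation}

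Second, I would manufacture an intertwiner by averaging. For the rank-one linear map $B : H_1 \longrightarrow H_2$ defined by $Bz = \langle y_1 , z \rangle \, y_2$, set $T = \int_{G} \rho_2(g) \, B \, \rho_1(g)^{-1} \, dg$. Using the left invariance of the Haar measure (property 2 of the invariant measure), the change of variable $g \mapsto hg$ shows $\rho_2(h) T = T \rho_1(h)$ for every $h \in G$, so $T$ is an intertwining operator from $(H_1 , \rho_1)$ to $(H_2 , \rho_2)$ in the sense of the commuting diagram above. Unfolding the definitions of $T$ and $B$ and pairing against $x_2$ then gives the key identity $\langle x_2 , T x_1 \rangle = \langle \rho_{x_1 y_1} \mid \rho_{x_2 y_2} \rangle$, which reduces the entire theorem to understanding the single operator $T$.

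Third, I would invoke Schur's lemma. If the representations are not equivalent, the lemma forbids $T$ from being an isomorphism, hence $T = 0$, so $\langle x_2 , T x_1 \rangle = 0$ and the first formula follows. If the representations are equivalent, I would first use the intertwining isomorphism to identify $H_1$ with $H_2$, turning $T$ into a self-intertwiner of a single irreducible representation of dimension $n$; Schur's lemma then forces $T = \lambda \, Id$ for a scalar $\lambda$. Taking the trace, and using cyclicity to get $\mathrm{tr}\!\left( \rho(g) B \rho(g)^{-1} \right) = \mathrm{tr}(B)$ together with $\mathrm{tr}(B) = \langle y_1 , y_2 \rangle$, yields $\lambda = \frac{1}{n} \langle y_1 , y_2 \rangle$. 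Substituting into the key identity gives $\langle x_2 , T x_1 \rangle = \frac{1}{n} \langle y_1 , y_2 \rangle \, \langle x_2 , x_1 \rangle$, which is the second formula.

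The change-of-variable verification of the intertwining property and the trace evaluation are routine, so the genuinely delicate point is the equivalent case: Schur's lemma produces a scalar operator only when source and target are literally the same irreducible space, so one must pass through the isomorphism supplied by equivalence and check that the trace normalisation survives this identification intact. The other place where care matters is the conjugate-linear-in-the-first-slot convention, which must be tracked consistently throughout so that $B$ is genuinely linear and its trace equals $\langle y_1 , y_2 \rangle$ rather than its conjugate; an error there would corrupt the final constant.
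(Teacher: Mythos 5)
The paper itself offers no proof of this theorem: it is stated as a known fact (the Schur orthogonality relations), imported from the representation-theory references of Section 2, and the text passes directly to the orthonormal-basis special case. So your proposal cannot be compared to a proof in the paper; it can only be judged on its own, and on its own it is correct and is the standard argument. The reduction $\overline{\rho_{x_1 y_1}(g)} = \langle y_1 , \rho_1(g)^{-1} x_1 \rangle$ respects the paper's convention (conjugate-linear first slot), the operator $T = \int_G \rho_2(g)\, B\, \rho_1(g)^{-1}\, dg$ with $Bz = \langle y_1 , z \rangle\, y_2$ is an intertwiner by left invariance of the Haar measure, the identity $\langle x_2 , T x_1 \rangle = \langle \rho_{x_1 y_1} \mid \rho_{x_2 y_2} \rangle$ is a correct unwinding of definitions, and Schur's lemma kills $T$ in the inequivalent case. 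Two steps should be made explicit, because the paper's Lemma (Schur) states only the dichotomy ``$T = 0$ or $T$ is an isomorphism,'' not the scalar form you invoke. First, for the equivalent case you need: a self-intertwiner of a finite-dimensional complex irreducible representation is $\lambda\, Id$; this follows from the paper's dichotomy by noting $T$ has an eigenvalue $\lambda$ (here finite-dimensionality of irreducibles of a compact group and algebraic closedness of $\mathbb{C}$ are used), so $T - \lambda\, Id$ is an intertwiner with nontrivial kernel, hence zero. Second, your identification of $H_1$ with $H_2$ must be by a \emph{unitary} intertwiner, not merely a bijective one, or else neither $\langle x_2 , x_1 \rangle$ nor the trace normalisation is preserved; this is repaired by polar decomposition: if $S$ is an invertible intertwiner, then $S^{*}S$ is a positive self-intertwiner of $\rho_1$, hence equals $c\, Id$ with $c > 0$, and $S / \sqrt{c}$ is a unitary intertwiner. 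With these two standard insertions your proof is complete, and indeed the theorem as stated already presupposes such an identification, since otherwise the expression $\langle x_2 , x_1 \rangle$ pairs vectors living in different spaces.
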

In particular, if we have orthonormal basis

\begin{equation}
< \rho_{i j}(g) \mid \rho_{k l}(g)) > = 0 \nonumber
\end{equation}
if the representations are not equivalent, or

\begin{equation}
< \rho_{i j}(g) \mid \rho_{k l}(g) > = \frac{1}{n} \delta_{i k} \delta_{j l}  \nonumber
\end{equation}
if the representations are equivalent, where $\delta_{ij}$ is the Kronecker delta. 

The set of equivalence classes of irreducible representations of $G$ is denoted by $\hat{G}$.

\begin{theorem}\danger{Peter - Weyl.}
Let $f \in L^{2}(G)$. Then
\begin{equation}
f(g) = \sum_{\alpha \in \bar{G}} \sum_{i , j = 1}^{dim \ \rho^{\alpha}} c_{ij}^{\alpha} \ \rho_{ij}^{\alpha}(g) \nonumber
\end{equation}
where $\alpha$ runs over the different irreducible unitary representations, $\rho_{ij}^{\alpha}$ are the matrix coefficients of the representations in
orthonormal basis.  
\end{theorem}
Therefore, the coefficients are given by 

\begin{equation}
c_{ij}^{\alpha} = dim \ \rho^{\alpha} \ \int_{G} \overline{\rho_{i j}^{\alpha}(g)} f(g) \ dg \nonumber
\end{equation}

\subsection{SU(2)}

In this part, we will summarise some important aspects of the irreducible representations of the group
$G = SU(2)$ which are used in LQG. $SU(2)$ consist of the matrices given by

\begin{equation*}
\left(
      \begin{array}{ccc}
        a & b   \\
        - \overline{b} & \overline{a}  
        \end{array} \right)
  \end{equation*}
where $a, b \in \mathbb{C}$ and $\mid a \mid^2 + \mid b \mid^2 = 1$. It can be easily seen that $SU(2)$ is diffeomorphic to the three dimensional sphere $S^3$. 

$SU(2)$ acts on the space of complex valued functions on $\mathbb{C}^2$ as follows: we have that for $g \in SU(2)$, $g^{-1} = {}^{t}_{}\overline{g}$; the action
is given by

\begin{equation}
\rho(g) f = f \circ g^{-1} \nonumber
\end{equation}
so that 

\begin{equation}
\rho(g) f(z_1 , z_2)  = f( \overline{a}z_1 - b z_2 , \overline{b}z_1 + az_2) \nonumber
\end{equation}
Now we restrict the space of complex valued functions on $\mathbb{C}^2$ to homogeneous polynomials and denote by $H^j$ the vector space of homogeneous polynomials
of degree $2j$ where $j = \frac{1}{2} \mathbb{N}$. The dimension of $H^j$ is $2j +1$.\footnote{This view is common to physicists and we will use it.
Mathematicians are more used to think of the space $H^n$ as the space of complex homogeneous polynomials on $\mathbb{C}^2$ of degree $n =2j, n \in \mathbb{N}$,
which has dimension $n+1$.} 

This space has basis given and denoted by

\begin{equation}
f_{m}^{j} (z_1 , z_2) = {z_1}^{j+m} \ {z_2}^{j-m} \nonumber
\end{equation}
where $-j \leq m \leq j$; $m$ is integer if $j$ is integer and half-integer if $j$ is half-integer. When restricted to the space $H^j$ the action $\rho$ gives a representation which we denote by $(H^j ,\rho^j )$.

The Lie algebra $\mathfrak{su}(2)$ of $SU(2)$ is a vector space of dimension three of traceless anti-Hermitian matrices with basis
given by

\begin{equation*}
\xi_1 = \frac{1}{2} \left(
      \begin{array}{ccc}
        0 & i   \\
        i & 0  
        \end{array} \right) \qquad
\xi_2 = \frac{1}{2} \left(
      \begin{array}{ccc}
        0 & -1   \\
        1 & 0  
        \end{array} \right) \qquad
\xi_3 = \frac{1}{2} \left(
      \begin{array}{ccc}
        i & 0   \\
        0 & -i  
        \end{array} \right)                 
  \end{equation*}
which satisfy the commutation relations $[\xi_1 , \xi_2] = \xi_3 , [\xi_2 , \xi_3] = \xi_1 , [\xi_3 , \xi_1] = \xi_2 $.  

Physicists in LQG prefer to use the Hermitian matrices given by $J_k = i \xi_k$ for $k=1,2,3$ where the commutation relations are given by 
$[J_1 , J_2] = i J_3 , [J_2 , J_3] = i J_1 , [J_3 , J_1] = i J_2 $.

Moreover, we consider the matrices $J_3 , J_+ = J_1 + i J_2 , J_- = J_1 - i J_2$ which have commutation relations 
$[J_+ , J_-] = 2 J_3 , [J_3 , J_+] =  J_+ , [J_3 , J_-] = - J_- $. 
It is known that the differentials $D$ of the representations $\rho^j$ are representations of the Lie algebra and therefore
give rise to operators $( D \rho^j ) J_3 ,  (D \rho^j ) J_+ ,  (D \rho^j ) J_-$
that act on the space $H^j$ as follows\footnote{The differential of the representation $(E, \rho)$ of a Lie Group $G$ gives a Lie Algebra representation.
Studying Lie Algebra representations is equivalent to studying the representations of the Lie Group. See \cite{yks} for details.}; rename the basis $f_{m}^{j} $ of $H^j$ with notation used in physics

\begin{equation}
\mid j , m > = \frac{1}{\sqrt{(j-m)! (j+m)!}}f_{m}^{j}  \nonumber
\end{equation}
then 

\begin{align}
&( D \rho^j ) J_3 \ \mid j , m > = m \ \mid j , m >  \nonumber \\
&( D \rho^j ) J_+ \ \mid j , m > = \sqrt{(j-m) (j+m +1)} \ \mid j , m +1 > \nonumber  \\
& ( D \rho^j ) J_- \ \mid j , m > =  \sqrt{(j+m) (j-m +1)} \ \mid j , m -1 > \nonumber
\end{align}
It is a well known proven fact that each representation $(H^j , \rho^j)$ of $SU(2)$ is unitary and irreducible with the fact that the basis
$ \mid j , m >$ is orthonormal. 

The tensor product of representations is used thoroughly in LQG. The tensor product of two irreducible representations is reducible and therefore 
is the direct sum of irreducible representations which for $SU(2)$ is known as the Clebsch-Gordan decomposition

\begin{equation}
 ( H^{j_1} \otimes H^{j_2}, \rho^{j_1} \otimes \rho^{j_2} )  = ( H^{\mid j_2 -j_1 \mid} \oplus H^{\mid j_2 -j_1 \mid +1} \oplus \cdots \oplus 
 H^{j_1 + j_2} , \rho^{\mid j_2 -j_1 \mid} \oplus \cdots \oplus \rho^{j_1 + j_2} ) \nonumber
\end{equation}
This decomposition of the tensor product of two unitary irreducible representations implies that we can consider
two orthonormal basis, one given by

\begin{equation}
\mid j_1 , m_1 > \otimes \mid j_2 , m_2 > \qquad \text{where} \qquad - j_1 \leq m_1 \leq j_1, \qquad - j_2 \leq m_2 \leq j_2 \nonumber
\end{equation}
And the second one given by

\begin{equation}
\mid J , M >   \qquad \text{where} \qquad \mid j_1 - j_2 \mid \leq J \leq j_1+ j_2 \qquad \text{and} \qquad - J \leq M \leq J \nonumber
\end{equation}
Since the tensor product and the direct sum decomposition are isomorphic spaces 
we can change basis and write down one of the basis vectors as a linear combination of the others

\begin{equation}
\mid J , M >  = \sum_{m_1 , m_2} C(J , M , j_1 , m_1 , j_2 , m_2) \ \mid j_1 , m_1 > \otimes \mid j_2 , m_2 > \nonumber
\end{equation}  
where the coefficients $C(J , M , j_1 , m_1 , j_2 , m_2)$ are known as Clebsch-Gordan coefficients.
It is more common in physics to use Wigner notation of these coefficients because of the symmetries which are easier to visualise. Let

\begin{equation*}
\left(
      \begin{array}{ccc}
        j_1 & j_2 & J   \\
        m_1 & m_2 & - M  
        \end{array} \right) = \frac{(-1)^{j_1 - j_2 + M}}{\sqrt{2J+1}} \ C(J , M , j_1 , m_1 , j_2 , m_2)
  \end{equation*}
Then we can write the change of basis formula as

\begin{equation*}
\mid J , M >  = \sum_{m_1 , m_2} (-1)^{j_2 - j_1 - M} \sqrt{2J+1}
\left(
      \begin{array}{ccc}
        j_1 & j_2 & J   \\
        m_1 & m_2 & - M  
        \end{array} \right) \
\mid j_1 , m_1 > \otimes \mid j_2 , m_2 > \nonumber
\end{equation*}  
It is also useful to write down the change of basis the other way around. This is given by 

\begin{equation*}
\mid j_1 , m_1 > \otimes \mid j_2 , m_2 > 
= \sum_{J , M} (-1)^{j_2 - j_1 - M} \sqrt{2J+1}
\left(
      \begin{array}{ccc}
        j_1 & j_2 & J   \\
        m_1 & m_2 & - M  
        \end{array} \right) \
\mid J , M >  \nonumber
\end{equation*}  
The symmetry properties of the Wigner symbol are can be found in the literature.

\section{Graphs and Triangulations}
 
This section is based on \cite{drs}.

\begin{definition}
A point configuration in Euclidean space $\mathbb{R}^n$ is a finite collection of points $\Pi = \{ p_1, p_2, ....p_k \}$  
\end{definition}

\begin{definition}
A convex hull of $\Pi$ is the intersection of all convex sets which contain the points of $\Pi$. It is denoted by $\text{conv}(\Pi)$.
\end{definition}

\begin{definition}
A d-simplex is the convex hull of $d+1$ affinely independent points in $\mathbb{R}^n$. We must have $n \geq d$. 
\end{definition}

\begin{definition}
A j-face of a d-simplex is the convex hull of $j+1$ of its vertices. 
\end{definition}

\begin{definition}
A triangulation of a point configuration $\Pi \in \mathbb{R}^n$ is a collection of n-simplices all of whose vertices are points of $\Pi$ such that
the union of all these simpleces equals $\text{conv}(\Pi)$ and any pair of these simpleces intersects in a common face. 
\end{definition}
It is allowed that the intersection of these pair of simpleces is empty.

\begin{definition}
A graph is a pair $\Gamma = (V, E)$ of sets, such that $E \subset [V]^2$. 
\end{definition}
$E \subset [V]^2$ means that the elements of $E$ are two-element subsets of $V$. Is is always assumed that $V \cap E = \emptyset$. 
 The elements of $V$ will be called vertices and the elements of $E$ will be called edges.

The vertex set of the graph $\Gamma$, will be denoted as $V(\Gamma)$ and its vertex set by $E(\Gamma)$.

\begin{definition}
The number of vertices of the graph $\Gamma$ is its order and is denoted $\mid \Gamma \mid$.
\end{definition}

 \begin{definition}
A vertex is incident to edge $e$ if $v \in e$. It is also said that $e$ is an edge at $v$. 
The two vertices incident to edge $e$ are called its endvertices. 
\end{definition}
 The set of all edges at a vertex $v$ is denoted by $E(v)$.

\begin{definition}
The number of edges $E(v)$ at vertex $v$ is called the degree of vertex $v$ and it is denoted by $d_{\Gamma}(v)$ or simply by $d(v)$. 
\end{definition}

\begin{definition}
Given a triangulation of a point configuration $\Pi \in \mathbb{R}^n$ we associate a graph called
the dual graph; this graph is drawn as follows:  draw a node for each n-simplex and an 
edge when two n-simpleces intersect in a (n-1)-face.
\end{definition}

Consider a triangulation of a point configuration in $\mathbb{R}^3$ and its dual graph. The dual graph is called a 2-complex in LQG. Observe that the 
dual graph associated to the boundary simplicial triangulation is given by a trivalent graph $\Gamma=(V , E)$.

 \section{Three dimensional LQG}
 
 We now start with the three dimensional study of quantum gravity as it is understood in LQG. This section is our main section and we 
 try to be as mathematical as possible as in our previous sections. This section is entirely inspired on \cite{rv}. 
A mathematical description of loop quantum gravity can be found in \cite{jb}, and reference \cite{t} is a very advanced
book on loop quantum gravity which is very mathematical. A very nice and useful introduction to the subject which uses more physical notation 
is \cite{md}.

\subsection{Spin Networks} 

In LQG there is a Hilbert space which is interpreted as the quantum space of states. 
Consider a trivalent directed graph $\Gamma = (V, E)$  with $\mid V \mid$ number of vertices  and $\mid E \mid$ number of edges. 
The Hilbert space of three dimensional LQG is a subspace of the Hilbert space $L^2(SU(2)) \times L^2(SU(2)) \cdots \times L^2(SU(2))$ of $\mid E \mid$ products
which we denote by $L^2(SU(2)^{\mid E \mid})$. This subspace is called the invariant space and it is the Hilbert space of the theory. This
space is given as follows. 

Consider a function $c: E \rightarrow \widehat{SU(2)}$ which goes from the set of edges to the
set of equivalence classes of irreducible representations of $SU(2)$ with the following properties. This function $c$ is called 
labelling function.  

So for each directed edge $e$ we have $c(e) = ( H^{j} , \rho^{j} )$. Denote the matrix coefficients of this representation by 

\begin{equation}
\rho_{mn}^{j} (g) = \ < e_m , \rho^{j}(g) e_n > \nonumber
\end{equation} 
We can think of this in a graphical way, where the directed edge is labelled $j$ and its vertices are labelled $m$ and $n$ and the edge is directed
from vertex labelled $n$ to vertex labelled $m$.

For each vertex $v$ we have three directed edges $e_1, e_2, e_3$ incident to $v$ and suppose that 
$c(e_1) = ( H^{j_1} , \rho^{j_1} )$ and $c(e_2)= ( H^{j_2} , \rho^{j_2} )$ and $c(e_3)= ( H^{j_3} , \rho^{j_3} )$. 

Assign to the vertex $v$ the Wigner coefficients \footnote{The Wigner coefficient $\widetilde{C}(j_1 , m_1 , j_2 , m_2 , j_3 , m_3)$ 
in a graphical way is thought as a trivalent vertex with its edges directed outwards and labelled $j_1, j_2, j_3$. If we had 
$\widetilde{C}(j_1 , m_1 , j_2 , m_2 , j_3 ,  - m_3)$  then it is thought as a trivalent vertex with
two of its edges directed outwards labelled $j_1, j_2$ and one inwards labelled $j_3$ Analogously if any of the indices $m_i$ changes sign.}

\begin{equation*}
v \mapsto \left(
      \begin{array}{ccc}
        j_1 & j_2 & j_3   \\
        m_1 & m_2 & m_3  
        \end{array} \right) : = \widetilde{C}(j_1 , m_1 , j_2 , m_2 , j_3 , m_3) 
 \end{equation*}
This is equivalent to assigning to the vertex $v$ an intertwining operator given by $T : \mathbb{C} \longrightarrow H^{j_1}  \otimes H^{j_2} \otimes H^{j_3} $. 
Although the notation of the Wigner coefficients is the $2 \times 3$ matrix, we will use the $\widetilde{C}(j_1 , m_1 , j_2 , m_2 , j_3 , m_3)$ notation  
of our own.

Now we are ready to present an element of the invariant Hilbert space of three dimensional LQG. 

By the Peter-Weyl theorem if $f \in L^2(SU(2)^{\mid E \mid})$ it must be of the form

\begin{equation}
f(g_1 , g_2 , \cdots g_{\mid E \mid}) = \sum_{j_1 \cdots j_{\mid E \mid}} \sum_{m , n = 1}^{dim \ \rho^{j}} 
c_{m_1 m_2 \cdots m_{\mid E \mid} n_1 n_2 \cdots n_{\mid E \mid}}^{j_1 j_2 \cdots j_{\mid E \mid}} \ 
\rho_{m_1 n_1}^{j_1}(g_1) \cdots \rho_{m_{\mid E \mid} n_{\mid E \mid}}^{j_{\mid E \mid}}(g_{\mid E \mid}) \nonumber
\end{equation}
According to \cite{rv}, the invariant Hilbert space is spanned by the functions $\Psi \in L^2(SU(2)^{\mid E \mid})$ given by
\footnote{Recall that we have considered a trivalent graph $\Gamma = (V, E)$  with $\mid V \mid$ number of vertices  and $\mid E \mid$ number of edges.} 

\begin{IEEEeqnarray}{rCl}
\Psi(g_1 , g_2 , \cdots g_{\mid E \mid}) = & \sum_{m_1 m_2 \cdots n_{\mid E \mid-1} n_{\mid E \mid}} & \
\widetilde{C_1}(j_1, m_1, j_2 , m_2 , j_3 , m_3)  \times \cdots \times \nonumber \\
& & \widetilde{C_{\mid V \mid}}( j_{\mid E \mid-2} ,  n_{\mid E \mid-2} ,  j_{\mid E \mid-1} ,  n_{\mid E \mid-1} ,  j_{\mid E \mid} ,  n_{\mid E \mid} ) \times \nonumber \\ 
&  & \rho_{m_1 n_1}^{j_1}(g_1) \cdots \cdots \rho_{m_{\mid E \mid} n_{\mid E \mid}}^{j_{\mid E \mid}}(g_{\mid E \mid}) \nonumber
\end{IEEEeqnarray}
These functions are known as the quantum state functions. Therefore the invariant Hilbert space is given by linear combinations of these functions.
This space is called Quantum Space. 

Let us show with examples how these generator functions $\Psi$ are given.  

\subsubsection{Example}

\bigskip

\bigskip

\danger{The theta graph}

\bigskip

The theta graph $\Theta = (V, E)$ has two vertices $v_1, v_2$ and three edges $e_1, e_2 , e_3$. If $c(e_1) = ( H^{j_1} , \rho^{j_1} )$, 
$c(e_2) = ( H^{j_2} , \rho^{j_2} )$ then recall that $j_3$ must be chosen between $\{  \mid j_2 - j_1\mid , \mid j_2 - j_1\mid +1 , \cdots j_1 + j_2 \ \}$. 
In this case the state function is given by 

\begin{IEEEeqnarray}{rCl}
\Psi (g_1 , g_2 , g_3) = & \sum_{m_1 m_2 m_3}  \ \sum_{n_1 n_2 n_3} & \
\widetilde{C}(j_1, m_1, j_2 , m_2 , j_3 , m_3) \ \widetilde{C}(j_1, - n_1, j_2 , - n_2 , j_3 , - n_3) \nonumber \\
&& \rho_{m_1 n_1}^{j_1}(g_1) \ \rho_{m_2 n_2}^{j_2}(g_2) \ \rho_{m_3 n_3}^{j_3}(g_3) \nonumber
\end{IEEEeqnarray}
where the sum takes the values $- j_1 \leq m_1 \leq j_1$, $- j_1 \leq n_1 \leq j_1$, $- j_2 \leq m_2 \leq j_2$, $- j_2 \leq n_2 \leq j_2$, 
$- j_3 \leq m_3 \leq j_3$, $- j_3 \leq n_3 \leq j_3$.

\section{State sum model}

As we have discussed it is equivalent to think of a triangulation of a point configuration or to think of the dual graph. Let $M$ be a triangulation
of a point configuration in $\mathbb{R}^3$. Let $\partial M = \Sigma$ be its boundary.\footnote{This boundary may have one component or may be composed of different
components.} Observe that the boundary is a two dimensional triangulation in the sense that it is composed of 0-simplices, 1-simplices and 2-simplices.  

 \begin{definition}
 A colouring of $M$ is a map $c: E(M) \rightarrow \widehat{SU(2)}$, where $E(M)$ is the set of 1-simplices of $M$ and $\widehat{SU(2)}$ is the set
 of equivalent irreducible representations of $SU(2)$.
 \end{definition}

\begin{definition}
Such colouring $c$ is called admissible if for each 2-simplex there is an intertwining operator $T : H_1 \otimes H_2 \otimes H_3 \rightarrow \mathbb{C}$. 
\end{definition}
Consider now the tensor product of three irreducible representations of $SU(2)$,  $( H^{j_{12}} \otimes H^{j_{23}} \otimes H^{j_{34}}, \rho^{j_{12}} \otimes \rho^{j_{23}} \otimes \rho^{j_{34}})$.
We know from section 2.1 that the tensor product of two irreducible representations decomposes in a direct sum of irreducible ones. 
Therefore, now have two possible homomorphisms 

\begin{equation}
F_1 = H^{j_{14}} \longrightarrow ( H^{j_{12}} \otimes H^{j_{23}} ) \otimes H^{j_{34}} \nonumber
\end{equation}   
and

\begin{equation}
F_2 = H^{j_{14}} \longrightarrow H^{j_{12}} \otimes ( H^{j_{23}}  \otimes H^{j_{34}} )  \nonumber
\end{equation}   
As 

\begin{equation}
H^{j_{12}} \otimes H^{j_{23}} =  \bigoplus_{j_{13}} H^{j_{13}} \nonumber
\end{equation}  
where $j_{13}$ runs from $\mid j_{23} - j_{12} \mid$ to $j_{23} + j_{12}$. We also have that  

\begin{equation}
H^{j_{23}} \otimes H^{j_{34}} =  \bigoplus_{j_{24}} H^{j_{24}} \nonumber
\end{equation} 
where $j_{24}$ runs from $\mid j_{34} - j_{23} \mid$ to $j_{34} + j_{23}$.

\begin{definition}
The coefficients which appear in the change of basis in the diagram

\bigskip

\bigskip

\begin{tikzcd}
H^{j_{14}} \arrow[r, "F_2"]\arrow[d,  "Id"red]
& H^{j_{12}} \otimes ( H^{j_{23}} \otimes H^{j_{34}} )  \arrow[d, "" ] \\
H^{j_{14}}  \arrow[r,  "F_1" red]
& ( H^{j_{12}} \otimes  H^{j_{23}} ) \otimes H^{j_{34}}  
\end{tikzcd}

\
\bigskip

\bigskip

are called the 6j-symbols and are denoted by

\begin{equation*}
\left \{
\begin{array}{ccc}
        j_{12} & j_{23} & j_{13}   \\
        j_{34} & j_{14} & j_{24}  
        \end{array} \right \}
 \end{equation*}

\end{definition}

\bigskip 

\bigskip

Explicitly the $6j$-symbols are given by 

\begin{IEEEeqnarray}{rCl}
& \left \{
\begin{array}{ccc}
        j_{1} & j_{2} & j_{3}   \\
        j_{4} & j_{5} & j_{6}  
    \end{array}      \right \} = \nonumber \\ & \sum_{m_1, ..... , m_6}  (-1)^{\sum_{i=1}^{6} (j_{i} - m_{i})}  \  
     \widetilde{C}(j_1, - m_1, j_2 , - m_2 , j_3 , - m_3) \ 
    \widetilde{C}(j_1, m_1, j_5 , - m_5 , j_6 , m_6)  \nonumber \\
    & \times \widetilde{C}(j_4, m_4, j_2 , m_2 , j_6 , - m_6) \ \widetilde{C}(j_3, m_3, j_4 , - m_4 , j_5 , m_5) \nonumber
 \end{IEEEeqnarray}
 The $6j$-symbol has the following symmetries 
 
 \begin{IEEEeqnarray}{rCl}
\left \{
\begin{array}{ccc}
        j_{1} & j_{2} & j_{3}   \\
        j_{4} & j_{5} & j_{6}  
    \end{array}      \right \} = 
\left \{
\begin{array}{ccc}
        j_{2} & j_{1} & j_{3}   \\
        j_{5} & j_{4} & j_{6}  
    \end{array}      \right \}  =    
\left \{
\begin{array}{ccc}
        j_{3} & j_{2} & j_{1}   \\
        j_{6} & j_{5} & j_{4}  
    \end{array}      \right \}    = 
\left \{
\begin{array}{ccc}
        j_{4} & j_{2} & j_{6}   \\
        j_{1} & j_{5} & j_{3}  
    \end{array}      \right \}     
    \nonumber
 \end{IEEEeqnarray}

\begin{definition}
Let $M$ be a triangulation
of a point configuration in $\mathbb{R}^3$ and $\partial M = \Sigma$ be its boundary. Let $E^{\circ} = \{ E(M) \in M^{\circ}\}$ be the interior edges of $M$. 
The state sum is defined by 

\begin{equation}
Z_M [ \Sigma] = \sum_{c} \ \prod_{E^{\circ}} (-1)^{2j} (2j+1) \ \prod_{\triangle} (-1)^{\sum_{i=1}^{6} j_{i}}  \
\left \{
\begin{array}{ccc}
        j_{1} & j_{2} & j_{3}   \\
        j_{4} & j_{5} & j_{6}  
    \end{array}      \right \}     \nonumber
\end{equation}
where the sum is over all functions $c: E^{\circ} \rightarrow \widehat{SU(2)}$ of admissible colourings 
of the interior 1-simplices $E^{\circ}$, and $\triangle$ is the set of 3-simplices of $M$. 
\end{definition}
The state sum $Z_M[\Sigma]$ is a function of the colouring of the boundary $\Sigma$ of $M$
The colouring of the boundary is fixed
since the state sum is only over the colouring of the interior 1-simplices. And if our triangulation configuration $M$ has no boundary
then $Z_M [\Sigma] \in \mathbb{C}$.

It is known that this state sum is interpreted geometrically 
as a sum over metrics on the interior of a manifold which match the fixed boundary metric. Moreover, 
for most triangulations this state sum diverges and therefore regularisation of the sum is needed. 
We will not be concerned with the metric problem here but in a future work. Neither will we be concerned with
the regularisation issue in this work as it is a different topic which has been extensively studied, see for example 
\cite{tv}, \cite{bng}. These subjects will be treated in a different one. 
 
 Physically the state sum is a Feynman path integral of general relativity. In this case of three dimensional Euclidean general relativity. This state sum is known as the Ponzano-Regge model
 \cite{pr} and its connection to loop quantum gravity was pointed out in \cite{r2}.
 We will deal with the regularisation and the mathematics of this state sum in another work and will describe its properties.

 \section{Discussion}
 
 We have introduced a well known state sum of three dimensional loop quantum gravity. The model is a quantisation of three dimensional Euclidean general relativity without
 cosmological constant. Originally this model came from a discretisation of a Feynman path integral of three dimensional general relativity. But now, in the
 loop quantum gravity this discrete model is not something we impose for simplicity problems but it is a real consequence of the theory. 
 Space is really quantised. 
 
 This three dimensional model is one of the simplest models of loop quantum gravity since it is a toy model in the physical sense. Moreover, we have introduced mathematical background 
 which is thoroughly used in many topics of loop quantum gravity.  
 
We have described the model in a mathematical way as much as formal as we could. Our purpose is to introduce loop quantum gravity 
to mathematicians or it could also be thought as a project in which we are introducing an approach to the field of loop quantum gravity
from a mathematical side; from the point of view of a mathematician. 
 
 We hope that the present work, article \cite{gi}, and future writings from a mathematical point of view inspire mathematics and physics students as well as researchers in the field
 to approach the subject of loop quantum gravity in a more rigorous mathematical way.    

\newpage

\end{document}